\providecommand{\U}[1]{\protect\rule{.1in}{.1in}}
\newtheorem{theorem}{Theorem}
\newtheorem{lemma}[theorem]{Lemma}
\newtheorem{remark}[theorem]{Remark}
\newenvironment{proof}[1][Proof]{\noindent\textbf{#1.} }{\ \rule{0.5em}{0.5em}}
\let\originalleft\left
\let\originalright\right
\renewcommand{\left}{\mathopen{}\mathclose\bgroup\originalleft}
\renewcommand{\right}{\aftergroup\egroup\originalright}
\begin{document}

\title{\textbf{Strong converse for the classical capacity of the pure-loss bosonic
channel}}
\author{
Mark M. Wilde\thanks{Department of Physics and Astronomy, Center for
Computation and Technology, Louisiana State University, Baton Rouge, Louisiana
70803, USA}
\and
Andreas Winter\thanks{ICREA \& F\'{\i}sica Te\`{o}rica: Informaci\'{o} i
Fenomens Qu\`{a}ntics, Universitat Aut\`{o}noma de Barcelona, ES-08193
Bellaterra (Barcelona), Spain} \thanks{School of Mathematics, University of
Bristol, Bristol BS8 1TW, United Kingdom}
}
\date{5 November 2013}

\maketitle

\begin{abstract}
This paper strengthens the interpretation and understanding of the classical
capacity of the pure-loss bosonic channel, first established in [Giovannetti
\textit{et al}., \textit{Physical Review Letters} \textbf{92}, 027902 (2004),
arXiv:quant-ph/0308012]. In particular, we first prove that there exists a
trade-off between communication rate and error probability if one imposes only
a mean-photon number constraint on the channel inputs. That is, if we demand
that the mean number of photons at the channel input cannot be any larger than
some positive number $N_{S}$, then it is possible to respect this constraint
with a code that operates at a rate $g\left(  \eta N_{S}/\left(  1-p\right)
\right)  $ where $p$ is the code's error probability, $\eta$ is the channel transmissivity,
and $g\left(  x\right)  $
is the entropy of a bosonic thermal state with mean photon number~$x$.
We then prove that a strong converse theorem holds for the classical capacity
of this channel (that such a rate-error trade-off cannot occur) if one instead
demands for a maximum photon number constraint, in such a way that mostly all
of the \textquotedblleft shadow\textquotedblright\ of the average density
operator for a given code is required to be on a subspace with photon number
no larger than $nN_{S}$, so that the shadow outside this subspace vanishes as
the number $n$ of channel uses becomes large.
Finally, we prove that 
a small modification of the well-known 
coherent-state coding scheme meets this more demanding constraint.
\end{abstract}

\section*{Introduction}

The pure-loss bosonic channel is one of the most important communication
channels studied in quantum information theory \cite{S09,WPGCRSL12}. It has
acquired this elevated status because it is a simple model for free-space
communication or transmission over fiber optic cables. Indeed, a normal-mode
decomposition of a quantized propagating electromagnetic field in free space
leads naturally to a set of orthogonal spatio-temporal modes \cite{S09}, such
that the $k$th output mode can be expressed in terms of the $k$th input and
environment modes as follows:%
\begin{equation}
\hat{b}_{k}=\sqrt{\eta_{k}}\,\hat{a}_{k}+\sqrt{1-\eta_{k}}\,\hat{e}_{k},
\label{eq:bosonic-channel}%
\end{equation}
where $\hat{a}_{k}$, $\hat{b}_{k}$, and $\hat{e}_{k}$ are the annihilation
operators corresponding to the $k$th field mode of the sender, receiver, and
environment, respectively. 
For the pure-loss channel, the environment modes are originally prepared in 
the vacuum state.
The transmissivity parameter $\eta_{k}\in\left[
0,1\right]$ characterizes (roughly) the fraction of photons that make it
through one of these channels on average to the receiver, and the state
prepared at each environment mode is the vacuum state. When attempting to gain
an information-theoretic understanding of free space communication, it is
often more convenient and simpler to focus on a single-mode channel of the
form in (\ref{eq:bosonic-channel}), rather than the full normal mode
decomposition, and this is what we refer to as the pure-loss bosonic channel.

If we allow for signal states at the input of the pure-loss bosonic channel
that have an arbitrarily large number of photons, then the classical capacity
of this channel is infinite. This is because, with infinite-energy signal
states, one can space them out in such a way that there are an infinite number
of them that are all 
arbitrarily well 
distinguishable from one another, at the channel output. 
Thus, in order to have a sensible notion of classical 
capacity for this channel, we
should impose a constraint on the photon number $a^\dagger a$ 
(or, equivalently, energy $H=a^\dagger a + \frac12$ of the signal
states). One natural constraint is on the mean photon number---i.e., we demand
that the mean number of photons in any codeword transmitted through the
channel should be no larger than some number $N_{S}\geq0$. With such a
constraint, the classical capacity of this channel is equal to $g\left(  \eta
N_{S}\right)  $ \cite{GGLMSY04}, where%
\begin{equation}
g\left(  x\right)  \equiv\left(  x+1\right)  \log_{2}\left(  x+1\right)
-x\log_{2}x \label{eq:thermal-state-entropy}%
\end{equation}
is the entropy of a bosonic thermal state with mean photon number $x$. This result
follows from a proof that there exists a coding scheme that can achieve this
rate \cite{HolevoWerner2001},
and a matching
converse proof that demonstrates it is impossible to have perfectly reliable
communication if the rate exceeds $g\left(  \eta N_{S}\right)  $
\cite{GGLMSY04}.\footnote{The reader might also 
consider related work on classical communication
over noiseless bosonic channels \cite{YO93},
classical communication over pure-state \cite{PhysRevA.54.1869} and  general quantum channels
\cite{PhysRevA.56.131,Hol98}, and other schemes for decoding the pure-loss bosonic channel
\cite{WGTS11}.}

Although Ref.~\cite{GGLMSY04} proved that $g\left(  \eta N_{S}\right)  $ is
equal to the classical capacity of the pure-loss bosonic channel, the converse
theorem used there is only a
\textquotedblleft weak converse,\textquotedblright\ meaning that the upper
bound on the rate $R$ of any coding scheme for this channel with error
probability $\varepsilon$ is of the following form:%
\begin{equation}
R\leq\frac{1}{1-\varepsilon}\left[  g\left(  \eta N_{S}\right)  +h_{2}\left(
\varepsilon\right)  \right]  , \label{eq:weak-converse-bosonic}%
\end{equation}
where $h_{2}\left(  \varepsilon\right)  $ is the binary entropy, with the property that
$\lim_{\varepsilon\rightarrow0}h_{2}\left(  \varepsilon\right)  =0$. Thus, in
order to establish $g\left(  \eta N_{S}\right)  $ as the capacity, one really
needs to take the limit in (\ref{eq:weak-converse-bosonic}) as the error
probability $\varepsilon
\rightarrow0$. This in fact is the hallmark of a weak converse---it leaves
room for a trade-off between rate and error, and suggests that one might be
able to attain a higher communication rate by allowing for some error.

A strong converse theorem demonstrates that there is no such
trade-off in the limit of large blocklength.
That is, a strong converse theorem holds if the error probability
converges to one in the limit of many channel uses when the communication rate
of a coding scheme exceeds the classical capacity. Thus, such a theorem
improves our understanding of the capacity as a sharp dividing line between
what communication rates are possible or impossible, and in this sense, it is
analogous to a phase transition in statistical physics. Furthermore, there are
applications of strong converse theorems in establishing security for
particular models of cryptography \cite{KWW12}.


Several prior works have established the strong converse for the classical
capacity of certain quantum channels. There are two independent proofs 
\cite{W99} and \cite{ON99} that the strong converse theorem holds for any discrete
memoryless channel with a classical input and finite-dimensional quantum
output, so-called ``cq-channels.''
More generally, it holds for arbitrary finite dimensional quantum channels
with product state encoding \cite{ON99,WinterPhD99}.
Many years later, it was shown that the strong converse holds for
covariant channels for which their maximum output $p$-norm is multiplicative
\cite{KW09}. Finally, recent work has shown that the strong converse holds
for all entanglement-breaking and Hadamard channels \cite{WWY13}.

\section*{Summary of results}

This paper establishes several facts regarding classical communication over
the pure-loss bosonic channel:

\begin{enumerate}
\item If we demand only that the mean photon number is no larger than some
number $N_{S}\geq0$, then we show that the strong converse does not hold (see
Section~\ref{sec:no-s-c-mean-ph}). We can show that this is the case even if
we restrict the codewords to be pure states. In some sense, this latter result
provides a distinction between the classical and quantum theories of
information for continuous variables, but it does bear some similarities with
the observations in Theorem~77 of Ref.~\cite{P10} and we remark on this point
further in Section~\ref{sec:no-s-c-mean-ph}.

\item In light of the above result, we can only hope to prove that the strong
converse holds under some alternate photon number constraint. Let $\rho_{m}$
denote an $n$-mode codeword in a given codebook, so that the average code
density operator is given by $\frac{1}{M}\sum_{m}\rho_{m}$, where
$M$ is the total number of messages. We instead demand
that the average code density operator satisfies a maximum photon number
constraint, such that it should have a large
``shadow'' onto a subspace of photon number no more than $\left\lceil
nN_{S}\right\rceil $:%
\begin{equation}
\frac{1}{M}\sum_{m}\text{Tr}\left\{  \Pi_{\left\lceil nN_{S}\right\rceil }%
\rho_{m}\right\}  \geq1-\delta\left(  n\right)  .
\label{eq:photon-number-constraint-1}%
\end{equation}
In the above, $\Pi_{\left\lceil nN_{S}\right\rceil }$ is the projector onto a subspace
with photon number no larger than $\left\lceil nN_{S}\right\rceil $ and
$\delta\left(  n\right)  $ decreases to zero with increasing $n$. Under such a
constraint, we prove in Section~\ref{sec:s-c-max-p-n-constraint} that the
strong converse holds for the classical capacity of the pure-loss bosonic channel.

\item Finally, we prove in Section~\ref{sec:code-existence}\ that there exist
codes for the pure-loss bosonic channel that meet the constraint in
(\ref{eq:photon-number-constraint-1}) while having an error probability that
can be less than an arbitrarily small constant for a sufficiently large number
of channel uses. Indeed, we show that the usual coherent state encoding 
scheme essentially satisfies the constraint. 
\end{enumerate}

\section{No strong converse under a mean photon number constraint}

\label{sec:no-s-c-mean-ph}We first prove that a strong converse does not hold
for the classical capacity of the pure-loss bosonic channel if we impose only
a mean photon number constraint. Indeed, a method for proving the existence of
a code that achieves the classical capacity of the pure-loss bosonic channel
is first to sample coherent-state codewords independently from a
circularly-symmetric complex Gaussian distribution with variance $N_{S}$
\cite{HolevoWerner2001,GGLMSY04}. Let%
\[
\left\vert \alpha^{n}\left(  m\right)  \right\rangle \equiv\left\vert
\alpha_{1}\left(  m\right)  \right\rangle \otimes\cdots\otimes\left\vert
\alpha_{n}\left(  m\right)  \right\rangle
\]
denote each of the $n$-mode coherent state codewords (with the dependence on
the message $m$ explicitly indicated)\ and let $\left[  M\right]  $ denote a
message set of size $M$. Then one can prove that there exists a choice of
codebook such that every codeword in the resulting codebook $\left\{
\left\vert \alpha^{n}\left(  m\right)  \right\rangle \right\}  _{m\in\left[
M\right]  }$ satisfies the mean photon number constraint. Let $\left\vert
\beta^{n}\left(  m\right)  \right\rangle $ denote the state resulting from
sending the codeword $\left\vert \alpha^{n}\left(  m\right)  \right\rangle $
through the pure-loss bosonic channel, so that%
\begin{equation}
\left\vert \beta^{n}\left(  m\right)  \right\rangle \equiv\left\vert
\sqrt{\eta}\alpha_{1}\left(  m\right)  \right\rangle \otimes\cdots
\otimes\left\vert \sqrt{\eta}\alpha_{n}\left(  m\right)  \right\rangle
,\label{eq:output-states}%
\end{equation}
and $\eta$ is the channel transmissivity parameter. Furthermore, this choice
of codebook is such that the receiver can decode the transmitted codewords
with arbitrarily high success probability by performing a square-root
measurement \cite{PhysRevA.54.1869}\ or a sequential decoding measurement
\cite{WGTS11}, for example. That is, as long as $\left(  \log_{2}M\right)
/n\approx g\left(  \eta N_{S}\right)  $ and the number $n$\ of channel uses is
sufficiently large, there exists a measurement $\left\{  \Lambda_{m}\right\}
_{m\in [M]}$ and codebook such that%
\[
\forall m\in\left[  M\right]  :\text{Tr}\left\{  \Lambda_{m}\left\vert
\beta^{n}\left(  m\right)  \right\rangle \left\langle \beta^{n}\left(
m\right)  \right\vert \right\}  \geq1-\varepsilon,
\]
where $\varepsilon$ is an arbitrarily small positive number. For finite $n$,
$M$, and $\varepsilon$, we call such a code an $(n,M,\varepsilon)$
code. 

\subsection{No strong converse with mixed-state codewords}

Now, consider a pure-loss bosonic channel with transmissivity parameter $\eta$
and mean photon number constraint $N_{S}$. As we have stated before, the
classical capacity of this channel is equal to $g\left(  \eta N_{S}\right)  $ \cite{GGLMSY04}.
We show that a strong converse theorem cannot hold---our approach is to
employ a codebook of the above form at a rate larger than $g\left(  \eta
N_{S}\right)  $. Indeed, let $\left\{  \left\vert \alpha^{n}\left(  m\right)
\right\rangle \right\}  _{m\in\left[  M\right]  }$ now denote a codebook such
that each codeword has mean photon number $P>N_{S}$, let the rate of the code
be $\left(  \log_{2}M\right)  /n\approx g\left(  \eta P\right)  >g\left(  \eta
N_{S}\right)  $, and let $\left\{  \Lambda_{m}\right\}  _{m\in\left[
M\right]  }$ be a decoding measurement for this codebook. We then modify the
codewords as follows:%
\begin{equation}
\rho\left(  m\right)  \equiv\left(  1-p\right)  \left\vert \alpha^{n}\left(
m\right)  \right\rangle \left\langle \alpha^{n}\left(  m\right)  \right\vert
+p\left(  \left\vert 0\right\rangle \left\langle 0\right\vert \right)
^{\otimes n},\label{eq:mixture-codeword}%
\end{equation}
where $p$ is such that$\ 0\leq p\leq1$ and $\left(  1-p\right)  P=N_{S}$ and
$\left\vert 0\right\rangle ^{\otimes n}$ is the $n$-fold tensor product vacuum
state. Observe that the mean photon number of each codeword $\rho\left(
m\right)  $ is equal to $\left(  1-p\right)  P=N_{S}$ (because the vacuum
contributes nothing to the photon number). Then the state resulting from
transmitting these codewords through the pure-loss bosonic channel is as
follows:%
\begin{equation}
\left(  1-p\right)  \left\vert \beta^{n}\left(  m\right)  \right\rangle
\left\langle \beta^{n}\left(  m\right)  \right\vert +p\left(  \left\vert
0\right\rangle \left\langle 0\right\vert \right)  ^{\otimes n}%
,\label{eq:output-mixture}%
\end{equation}
with $\left\vert \beta^{n}\left(  m\right)  \right\rangle $ defined in the
same way as in (\ref{eq:output-states}). The success probability for every
codeword is as follows:%
\begin{align}
  \text{Tr}\left\{  \Lambda_{m}\left(  \left(  1-p\right)  \left\vert
\beta^{n}\left(  m\right)  \right\rangle \left\langle \beta^{n}\left(
m\right)  \right\vert +p\left(  \left\vert 0\right\rangle \left\langle
0\right\vert \right)  ^{\otimes n}\right)  \right\}  
&  \geq\left(  1-p\right)  \text{Tr}\left\{  \Lambda_{m}\left\vert \beta
^{n}\left(  m\right)  \right\rangle \left\langle \beta^{n}\left(  m\right)
\right\vert \right\}  \label{eq:decoder-perf-1}\\
&  \geq\left(  1-p\right)  \left(  1-\varepsilon\right)
,\label{eq:decoder-perf-2}%
\end{align}
simply by using the fact that the decoding measurement $\left\{  \Lambda
_{m}\right\}  $ decodes the codewords $\left\{  \left\vert \beta^{n}\left(
m\right)  \right\rangle \right\}  $ with success probability larger than
$1-\varepsilon$. Thus, the success probability need not converge to zero in
the limit of many channel uses if the rate of the code exceeds the classical
capacity and we impose only a mean photon number constraint on each codeword
in the codebook.

\begin{remark}
\label{rem:q-cl-similarity}There is nothing particularly \textquotedblleft
quantum\textquotedblright\ about the above argument, other than the fact that
we employ mixtures of quantum states as codewords and a quantum measurement
for decoding. Indeed, the same argument demonstrates that a strong converse
does not hold when we impose only a mean power constraint for each classical
codeword transmitted over the classical additive white Gaussian noise channel
but allow for probabilistic encodings. That is, for a given codebook where
each codeword has mean power $P$, we can always produce a new codebook such
that each codeword is a $\left(  1-p,p\right)  $\ Bernoulli mixture of a
codeword from the original codebook and the all-zero signal. One then produces
a codebook where each codeword has mean power $\left(  1-p\right)  P$, and we
can violate the strong converse using an argument very similar to the above
one. However, if we restrict to deterministic or classical \textquotedblleft
pure-state\textquotedblright\ encodings, then a strong converse theorem does
hold as shown in Ref.~\cite{P10}.
\end{remark}

\subsection{No strong converse with pure-state codewords}

This section demonstrates an important distinction between the classical and
quantum theories of information for continuous variables: we show that a
strong converse does not hold when imposing a mean photon-number constraint
and even when we restrict to pure-state codewords (see
Remark~\ref{rem:q-cl-similarity}\ above). Our argument is similar to the one
in the previous section.

We are again considering the pure-loss bosonic channel with transmissivity
parameter $\eta$ and mean photon number constraint $N_{S}$. We will show the
existence of a codebook with pure-state codewords each with mean photon number
$N_{S}$ such that the rate is strictly larger than $g\left(  \eta
N_{S}\right)  $ while the success probability of decoding is bounded from
below by a constant number between zero and one.

The idea is similar to that in the previous section, except we make our
codewords be superpositions of the following form:%
\begin{equation}
\left\vert \gamma_{p}\left(  m\right)  \right\rangle \equiv\sqrt
{1-p}\left\vert \alpha^{n}\left(  m\right)  \right\rangle \left\vert
0\right\rangle +\sqrt{p}\left\vert 0\right\rangle ^{\otimes n}\left\vert
1\right\rangle , \label{eq:gamma-states}
\end{equation}
so that we add just one more mode to send through the channel that has a
negligible effect on the parameters of the code (an $(n,M,\varepsilon)$ code
becomes an $(n+1, M, \varepsilon)$ code, which is a negligible change for
large $n$). This extra mode uses a single
photon to purify the state in (\ref{eq:mixture-codeword}). The mean photon
number of the codeword $\left\vert \gamma_{p}\left(  m\right)  \right\rangle $
is equal to%
\begin{align*}
&  \text{Tr}\left\{  \frac{1}{n+1}\sum_{i=1}^{n+1}\hat{a}_{i}^{\dag}\hat
{a}_{i}\left\vert \gamma_{p}\left(  m\right)  \right\rangle \left\langle
\gamma_{p}\left(  m\right)  \right\vert \right\} \\
&  =\text{Tr}\left\{  \frac{1}{n+1}\sum_{i=1}^{n+1}\hat{a}_{i}^{\dag}\hat
{a}_{i}\left[
\begin{array}
[c]{c}%
\left(  1-p\right)  \left\vert \alpha^{n}\left(  m\right)  \right\rangle
\left\langle \alpha^{n}\left(  m\right)  \right\vert \otimes\left\vert
0\right\rangle \left\langle 0\right\vert +\sqrt{\left(  1-p\right)
p}\left\vert \alpha^{n}\left(  m\right)  \right\rangle \left\langle
0\right\vert ^{\otimes n}\otimes\left\vert 0\right\rangle \left\langle
1\right\vert \\
+\sqrt{\left(  1-p\right)  p}\left\vert 0\right\rangle ^{\otimes
n}\left\langle \alpha^{n}\left(  m\right)  \right\vert \otimes\left\vert
1\right\rangle \left\langle 0\right\vert +p\left(  \left\vert 0\right\rangle
\left\langle 0\right\vert \right)  ^{\otimes n}\otimes\left\vert
1\right\rangle \left\langle 1\right\vert
\end{array}
\right]  \right\} \\
&  =\left(  1-p\right)  \frac{nP}{n+1}+p\frac{1}{n+1}.
\end{align*}
Thus, again by tuning $p$ and choosing $n$ large enough, we can set $\left(
1-p\right)  \frac{nP}{n+1}+p\frac{1}{n+1}=N_{S}$. The receiver operates by
tracing over the very last mode, which dephases the outputs to be of the form
in (\ref{eq:output-mixture}), and he then uses the decoding measurement for
the codebook $\left\{  \left\vert \alpha^{n}\left(  m\right)  \right\rangle
\right\}  $. More explicitly, the receiver measures the $n+1$ output modes
with the decoding POVM $\{\Lambda_m \otimes I\}$,
where the operators $\Lambda_m$ are the same as those from the previous section
(acting on the first $n$ modes) and
the identity operator acts on the very last mode. We then find the following bound on
the success probability for every codeword:
\begin{align*}
\operatorname{Tr}\{ (\Lambda_m \otimes I) \mathcal{N}^{\otimes (n+1)}
(\left\vert \gamma_{p}\left(  m\right)  \right\rangle \left\langle \gamma_{p}\left(  m\right)  \right\vert )\}
& = \operatorname{Tr}\{ \Lambda_m  \mathcal{N}^{\otimes n}
(\rho(m) )\} \\
& = \text{Tr}\left\{  \Lambda_{m}\left(  \left(  1-p\right)  \left\vert
\beta^{n}\left(  m\right)  \right\rangle \left\langle \beta^{n}\left(
m\right)  \right\vert +p\left(  \left\vert 0\right\rangle \left\langle
0\right\vert \right)  ^{\otimes n}\right)  \right\}  \\
& \geq (1-p)(1-\varepsilon).
\end{align*}
The first equality follows from the fact that the channel is
completely positive and trace preserving
and from the fact that the state $\left\vert \gamma_{p}\left(  m\right)  \right\rangle$
defined in (\ref{eq:gamma-states})
is a purification of the state $\rho(m)$ defined in (\ref{eq:mixture-codeword}).
The next equality is from the definition of $\rho(m)$.
The last inequality follows from the argument in~(\ref{eq:decoder-perf-1}%
)-(\ref{eq:decoder-perf-2}).
The resulting code operates at a rate $\approx g\left(  \eta P\right)  $ while meeting
the mean photon number constraint. Thus, a strong converse theorem cannot hold
for the pure-loss bosonic channel with only a mean photon number constraint
and when restricting to pure-state codewords.

\begin{remark}
The coding scheme given both in this section and the previous one demonstrate that
we can achieve a rate-error trade-off of $( g(\eta N_S / (1-p)) , p)$ for all
$0 \leq p \leq 1$, where $p$ is the error probability and $N_S$ is the mean
photon number constraint. This result complements the classical result in Theorem 77 of
Ref.~\cite{P10}. However, in our case,
we have not proved that this trade-off is optimal, merely that
it is achievable.
\end{remark}

\section{Strong converse under a maximum photon number constraint}

\label{sec:s-c-max-p-n-constraint}
In this section, we prove that the strong converse holds when imposing 
a particular maximum photon number constraint.
Our approach for proving the strong converse
theorem is a simulation based argument, similar to that used in
Refs.~\cite{W02,BDHSW09,BBCW12,BBCW13}. We can illustrate the main idea behind
this argument by recalling a simple proof of the strong converse theorem for
the noiseless qubit channel \cite{KW09,N99}. Consider that any scheme for
classical communication over $n$ noiseless qubit channels consists of an
encoding of a message $m$\ as a quantum state $\rho_{m}$\ on $n$ qubits,
followed by a decoding measurement $\left\{  \Lambda_{m}\right\}  $ to recover
the message. Let $M$ be the total number of messages, so that the rate of the
code is $R\equiv\frac{1}{n}\log_{2}\left(  M\right)  $. The average success
probability of this scheme is bounded as follows:%
\begin{align}
\frac{1}{M}\sum_{m}\text{Tr}\left\{  \Lambda_{m}\rho_{m}\right\}   &
\leq\frac{1}{M}\sum_{m}\text{Tr}\left\{  \Lambda_{m}\right\} \nonumber\\
&  =M^{-1}2^{n}\nonumber\\
&  =2^{-n\left(  R-1\right)  }, \label{eq:trivial-SC}%
\end{align}
where the first inequality follows from the operator inequality $\rho_{m}\leq
I$, which holds for any density operator, and the first equality follows because
$\sum_{m}\Lambda_{m}=I^{\otimes n}$ (the identity operator on $n$ qubits).
Thus, it is clear that if the rate $R$ exceeds one, then the average success
probability of any communication scheme decreases exponentially fast to zero
with increasing blocklength.

Our proof of the strong converse for the classical capacity of the pure-loss
bosonic channel is similar in spirit to the above argument, but it requires
some nontrivial additions. First, we show that there is a simple protocol
using approximately $ng\left(  \eta N_{S}\right)  $ noiseless qubit channels
to faithfully simulate the action of $n$ instances of the pure-loss bosonic
channel with transmissivity $\eta$ when acting on a space with total photon
number less than $nN_{S}$, such that the simulation error becomes arbitrarily
small as $n$ becomes large. So we can combine this simulation protocol with
any classical code $\mathcal{C}$ for the pure-loss bosonic channel in which
each codeword has almost all of its ``shadow'' on a space with total photon number
less than $nN_{S}$. Letting $\mathcal{N}$ denote the pure-loss bosonic
channel, we can phrase this simulation argument in the language of resource
inequalities \cite{DHW05RI}\ as follows:%
\[
g\left(  \eta N_{S}\right)  \left[  q\rightarrow q\right]  \geq\left\langle
\mathcal{N}:\mathcal{C}\right\rangle .
\]
The meaning of the above resource inequality is
that one can simulate the action of $n$ instances of the pure-loss bosonic channel
on the codewords in$~\mathcal{C}$ by exploiting noiseless qubit channels at a
rate equal to $g\left(  \eta N_{S}\right)  $, and this simulation becomes perfect
in the limit as $n$ becomes large. If one could send classical
information over the pure-loss bosonic channel at a rate $R$\ larger than
$g\left(  \eta N_{S}\right)  $, then it would be possible to serially
concatenate the above protocol with a classical coding scheme for the channel
$\mathcal{N}$ and achieve the following resource inequality:%
\[
g\left(  \eta N_{S}\right)  \left[  q\rightarrow q\right]  \geq\left\langle
\mathcal{N}:\mathcal{C}\right\rangle \geq R\left[  c\rightarrow c\right]  ,
\]
where $R\left[  c\rightarrow c\right]$ denotes noiseless classical communication
at rate $R$.
Since the above protocol would give a strong violation of the Holevo bound for
$R>g\left(  \eta N_{S}\right)  $ (in particular, the refinement given in
(\ref{eq:trivial-SC})), it must not be possible. In fact, essentially the same
argument as in (\ref{eq:trivial-SC}) demonstrates that the error probability
of the classical communication protocol goes exponentially fast to one if the
rate $R$ is strictly larger than the classical capacity of this channel.
Figure~\ref{fig:simulation-argument}\ depicts this simulation argument.%
\begin{figure}[ht]
\begin{center}
\includegraphics[
natheight=3.594200in,
natwidth=8.666300in,
height=2.3921in,
width=5.7285in
]%
{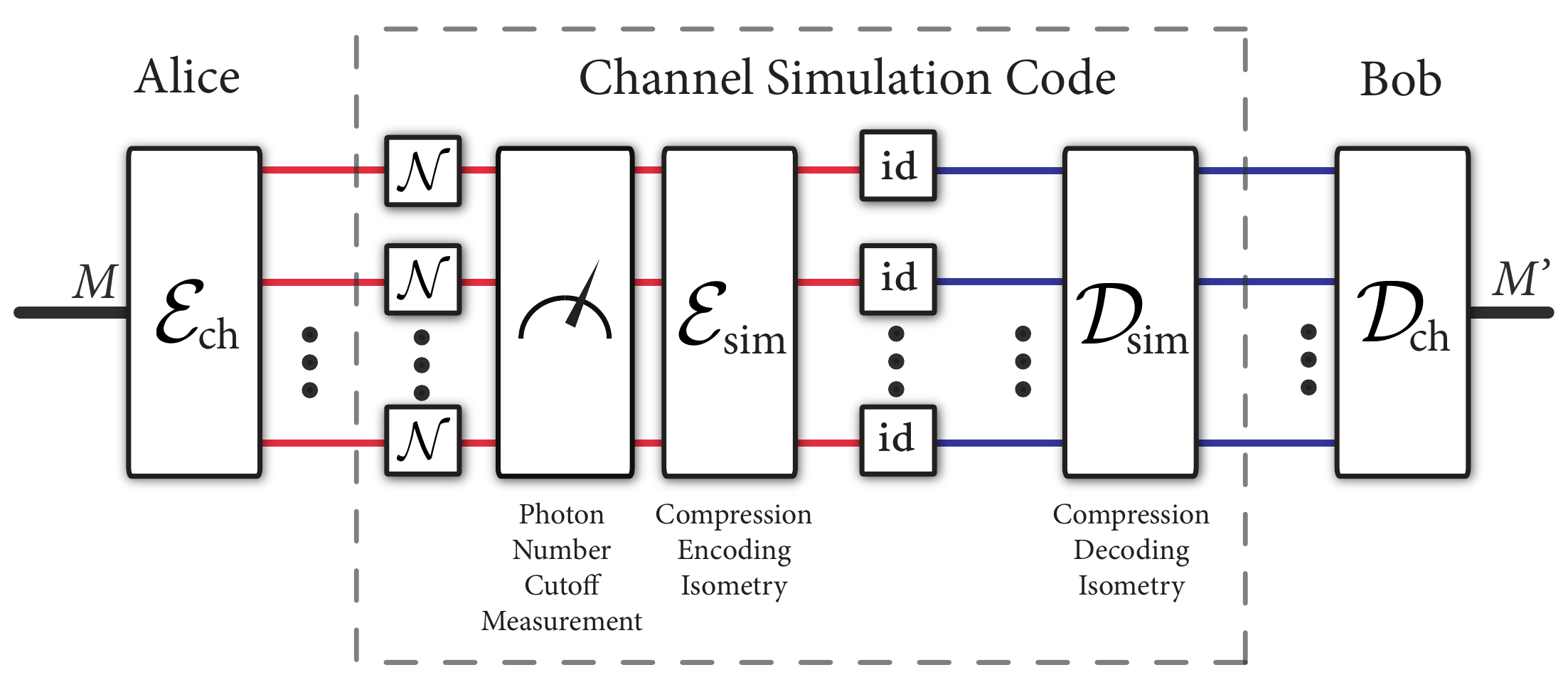}%
\caption{The simulation argument for the strong converse of the classical
capacity of the pure-loss bosonic channel with transmissivity $\eta\in\left[
0,1\right]  $. If an $n$-mode quantum state input to $n$ uses of the pure-loss
bosonic channel has nearly all of its ``shadow'' on a subspace with photon number no
larger than $nN_{S}$, then one can simulate this channel with high fidelity on
any such input at a rate of $g\left(  \eta N_{S}\right)  $ qubits per mode. If
it were possible to send classical information over the pure-loss bosonic
channel at a rate $R>g\left(  \eta N_{S}\right)  $, then Alice and Bob could
combine a channel code with the simulation code and violate the Holevo
bound. Analyzing this contradiction in more detail allows us to conclude the
strong converse theorem. The simulation protocol in the figure begins with
Alice encoding a message $M$ using the channel encoder $\mathcal{E}%
_{\text{ch}}$. Alice proceeds to simulate the channel by first actually
performing it on all of the input modes and then projecting onto a subspace
with photon number no larger than $\approx n\eta N_{S}$. This measurement
succeeds with high probability and then she can compress the quantum data at
the output of the measurement at a rate of $g\left(  \eta N_{S}\right)  $
qubits per mode (sending this quantum data over noiseless qubit channels, denoted by ``id''
in the figure for ``identity'' channels). Bob then decompresses the quantum data, completing the
channel simulation, and he finally decodes the classical message using the
channel decoder $\mathcal{D}_{\text{ch}}$.}%
\label{fig:simulation-argument}%
\end{center}
\end{figure}


Let $\Pi_{L}$ denote the projector onto a subspace of $n$ bosonic modes, such
that the total photon number is no larger than $L$:%
\[
\Pi_{L}\equiv\sum_{a_{1},\ldots,a_{n}:\sum_{i}a_{i}\leq L}\left\vert
a_{1}\right\rangle \left\langle a_{1}\right\vert \otimes\cdots\otimes
\left\vert a_{n}\right\rangle \left\langle a_{n}\right\vert ,
\]
where $\left\vert a_{i}\right\rangle $ is a photon number state of photon
number $a_{i}$. We call $\Pi_{L}$ the \textquotedblleft photon number cutoff
projector\textquotedblright\ in what follows.

\bigskip

We begin by proving two important lemmas.

\begin{lemma}
\label{lem:energy-entropy-dimension}
The rank of the photon number cutoff
projector $\Pi_{\left\lceil nN_{S}\right\rceil }$, where
$\lceil \cdot \rceil$ denotes the ceiling function, is no larger than%
\[
  2^{n\left[  g\left(  N_{S}\right)  +\delta\right]  },
\]
where
\[
  \delta\geq\frac{1}{n}\left( \log e + \log\left( 1 + \frac{1}{N_{S}} \right) \right).
\]
(Clearly, we can pick $\delta$ to be an arbitrarily small positive constant by
taking $n$ to be large enough.)
\end{lemma}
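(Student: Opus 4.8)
The plan is to evaluate $\operatorname{rank}\Pi_{\lceil nN_{S}\rceil}$ and then estimate it by comparison against a thermal state. Since the product photon-number states $\left\vert a_{1}\right\rangle \otimes\cdots\otimes\left\vert a_{n}\right\rangle$ appearing in the definition of $\Pi_{L}$ are mutually orthogonal unit vectors, $\operatorname{rank}\Pi_{L}$ equals the number of tuples $(a_{1},\ldots,a_{n})$ of non-negative integers with $\sum_{i}a_{i}\leq L$; a stars-and-bars count identifies this with $\binom{L+n}{n}$, although only an upper bound will be used below.

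For the estimate I would exploit the fact that $g(N_{S})$ is, by definition, the von Neumann entropy of the thermal state $\theta$ of mean photon number $N_{S}$, which is diagonal in the number basis with eigenvalues $\theta_{a}=N_{S}^{a}/(N_{S}+1)^{a+1}$. Hence $\theta^{\otimes n}$ is diagonal in the product number basis, with the eigenvalue attached to $\left\vert a_{1}\right\rangle \otimes\cdots\otimes\left\vert a_{n}\right\rangle$ equal to $N_{S}^{\sum_{i}a_{i}}/(N_{S}+1)^{n+\sum_{i}a_{i}}$. Because $N_{S}/(N_{S}+1)<1$, every product number state in the support of $\Pi_{L}$ with $L=\lceil nN_{S}\rceil$ carries a $\theta^{\otimes n}$-eigenvalue of at least $N_{S}^{L}/(N_{S}+1)^{n+L}$. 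These are among the eigenvalues of a density operator and therefore sum to at most one, which gives
\[
\operatorname{rank}\Pi_{\lceil nN_{S}\rceil}\cdot\frac{N_{S}^{L}}{(N_{S}+1)^{n+L}}\leq\text{Tr}\left\{ \Pi_{\lceil nN_{S}\rceil}\,\theta^{\otimes n}\right\} \leq1 ,
\]
i.e., $\operatorname{rank}\Pi_{\lceil nN_{S}\rceil}\leq(N_{S}+1)^{n}\,(1+1/N_{S})^{L}$.

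Taking a base-two logarithm and using $L=\lceil nN_{S}\rceil\leq nN_{S}+1$, the right-hand side is at most $n\bigl[\log(N_{S}+1)+N_{S}\log(1+1/N_{S})\bigr]+\log(1+1/N_{S})$, and the bracketed quantity equals $g(N_{S})$ once one writes $\log(1+1/N_{S})=\log(N_{S}+1)-\log N_{S}$. This already proves the lemma with $\delta=\frac{1}{n}\log(1+1/N_{S})$, which is no larger than the $\delta$ in the statement; an extra, completely harmless $\frac{1}{n}\log e$ of slack---arising for instance from estimating the ceiling term more crudely, or from replacing the exact geometric weights above by the Stirling bound $n!\geq(n/e)^{n}$ in the stars-and-bars count---recovers the stated form. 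All remainder terms are $O(1/n)$, so $\delta$ can be taken to be an arbitrarily small positive constant.

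There is no substantial obstacle here---this is a dimension-counting lemma---but the two points to get right are (i) the identification of $\operatorname{rank}\Pi_{L}$ with the lattice-point count in the simplex $\{\,\sum_{i}a_{i}\leq L\,\}$, and (ii) the choice of comparison state as the thermal state of mean $N_{S}$, so that the bound collapses to $g(N_{S})$ rather than a less transparent function of $L/n$. An alternative is to argue purely combinatorially via $\binom{L+n}{n}\leq 2^{(L+n)h_{2}(L/(L+n))}$ together with a first-order estimate of the concave function $g$ near $x=N_{S}$ (using $g'(x)=\log(1+1/x)$); both routes yield the leading term $n\,g(N_{S})$ with an $O(1)$ remainder.
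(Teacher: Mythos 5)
Your proof is correct but follows a genuinely different route from the paper's. The paper first identifies $\operatorname{rank}\Pi_{\lceil nN_S\rceil}$ with the binomial coefficient $\binom{\lceil nN_S\rceil+n}{n}$ by stars and bars, then invokes the standard combinatorial bound $\binom{N}{k}\leq 2^{N h_2(k/N)}$ (Cover--Thomas Example 11.1.3), and finally massages $(\lceil nN_S\rceil+n)\,h_2\bigl(n/(\lceil nN_S\rceil+n)\bigr)$ down to $n g(N_S)+\log e+\log(1+1/N_S)$ via several inequalities, including $\ln(x+y)\leq\ln x+y$. You instead compare the projector directly against the i.i.d.\ thermal state $\theta(N_S)^{\otimes n}$: since $\theta^{\otimes n}$ is diagonal in the number basis with eigenvalue $N_S^{s}/(N_S+1)^{n+s}$ on a state of total photon number $s$, and this is decreasing in $s$, every eigenvalue in the support of $\Pi_L$ is at least $N_S^{L}/(N_S+1)^{n+L}$; summing to at most one then gives $\operatorname{rank}\Pi_L\leq(N_S+1)^n(1+1/N_S)^L$, and the arithmetic $\log(N_S+1)+N_S\log(1+1/N_S)=g(N_S)$ finishes it. Your version is cleaner and in fact slightly tighter---it gives $\delta=\frac{1}{n}\log(1+1/N_S)$ with no $\log e$ term, because you never pay for a binomial-coefficient estimate or for the ceiling inside a logarithm. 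The conceptual advantage of your approach is that it makes the appearance of $g(N_S)$ (the thermal-state entropy) transparent rather than an artefact of algebraic simplification; it is the same ``counting by weighting against a reference distribution'' idea that underlies type-class size bounds. The paper's combinatorial route has the modest advantage of being self-contained at the level of binomial identities and a textbook bound, with no appeal to properties of the Bose--Einstein distribution, but both arguments are elementary and the conclusion is the same.
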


\begin{proof}
Consider that the rank of the photon number cutoff projector is exactly equal
to%
\[
  \sum_{j=0}^{\left\lceil nN_{S}\right\rceil }\binom{j+n-1}{n-1}
                               = \binom{\left\lceil nN_{S}\right\rceil + n}{n}.
\]
This is because we can enumerate the photon number eigenstates as tuples
$\left(  a_{1},\ldots,a_{n}\right)$ of non-negative integers such that
$\sum_{i}a_{i}\leq\left\lceil nN_{S}\right\rceil$,
which equals the number of tuples $\left( a_0, a_{1},\ldots,a_{n}\right)$
of $a_i \geq 0$ such that $\sum_{i=0}^n a_{i} = \left\lceil nN_{S}\right\rceil$.
In other words, we wish to count the unordered partitions of 
$\left\lceil nN_{S}\right\rceil$ into $n+1$ non-negative integer pieces, 
and these are in turn in one-to-one correspondence with selecting 
$n$ \textquotedblleft separator\textquotedblright\ positions in 
$\left\lceil nN_{S}\right\rceil + n$ to break a block of 
$\left\lceil nN_{S}\right\rceil+n$ into $n+1$ pieces of positive integer 
size, which is given by the binomial coefficient on the right hand side
above. 
We can then bound the rank of $\Pi_{\left\lceil nN_{S}\right\rceil }$ 
as follows:
\begin{align*}
  \operatorname{rank} \Pi_{\left\lceil nN_{S}\right\rceil }
     =    \sum_{j=0}^{\left\lceil nN_{S}\right\rceil }\binom{j+n-1}{n-1}
    &=    \binom{\left\lceil nN_{S}\right\rceil + n}{n}                       \\
    &\leq 2^{\left(  \left\lceil nN_{S}\right\rceil +n\right)
             h_{2}\bigl(  n/\left(  \left\lceil nN_{S}\right\rceil +n\right) \bigr)  }\\
    &\leq 2^{n\left[  g\left(  N_{S}\right)  + \delta\right]  }%
\end{align*}
where $\delta\geq\frac{1}{n}\left( \log e + \log\left( 1 + \frac{1}{N_{S}} \right) \right)$.
The first line of identities we have just argued, and the
inequality in the second line is a well known
combinatorial result (see Example~11.1.3 of Ref.~\cite{CT06}), where
$h_{2}\left(  p\right)  \equiv-p\log_{2}p-\left(  1-p\right)  \log_{2}\left(
1-p\right)  $. 
The last inequality follows from%
\begin{align*}
&\!\!\!\!\!\!\!\!\!\!\!\!\!\!\!
   \left(  \left\lceil nN_{S}\right\rceil +n\right) \,\, h_{2}\left(  n/\left(
\left\lceil nN_{S}\right\rceil +n\right)  \right) \\
&  =\left(  \left\lceil nN_{S}\right\rceil +n\right)  \left[  -\frac
{n}{\left\lceil nN_{S}\right\rceil +n}\log\left(  \frac{n}{\left\lceil
nN_{S}\right\rceil +n}\right)  -\frac{\left\lceil nN_{S}\right\rceil
}{\left\lceil nN_{S}\right\rceil +n}\log\left(  \frac{\left\lceil
nN_{S}\right\rceil }{\left\lceil nN_{S}\right\rceil +n}\right)  \right] \\
&  =-n\log\left(  \frac{n}{\left\lceil nN_{S}\right\rceil +n}\right)
-\left\lceil nN_{S}\right\rceil \log\left(  \frac{\left\lceil nN_{S}%
\right\rceil }{\left\lceil nN_{S}\right\rceil +n}\right) \\
&  =n\log\left(  \frac{\left\lceil nN_{S}\right\rceil +n}{n}\right)
+\left\lceil nN_{S}\right\rceil \log\left(  1+\frac{n}{\left\lceil
nN_{S}\right\rceil }\right) \\
&  \leq n\log\left(  \frac{nN_{S}+n+1}{n}\right)  +\left(  nN_{S}+1\right)
\log\left(  1+\frac{1}{N_{S}}\right) \\
&  \leq n\log\left(  N_{S}+1\right)  + \log e +nN_{S}\log\left(  1+\frac{1}{N_{S}%
}\right)  +\log\left(  1+\frac{1}{N_{S}}\right) \\
&  =ng\left(  N_{S}\right)  + \log e +\log\left(  1+\frac{1}{N_{S}}\right)  .
\end{align*}
The first few equalities are simple algebra. The first inequality follows from
the definition of the ceiling function. The second inequality follows because
$\ln\left(  x+y\right)  \leq\ln\left(  x\right)  +y$ for $x\geq1$ and
$y\geq0$. The last equality uses the definition of $g\left(  x\right)  $ in
(\ref{eq:thermal-state-entropy}).
\end{proof}



\begin{lemma}
\label{lem:input-output-size}Let $\rho^{\left(  n\right)  }$ be a density
operator on $n$ modes such that%
\[
\operatorname{Tr}\left\{  \Pi_{\left\lceil nN_{S}\right\rceil }\rho^{\left(
n\right)  }\right\}  \geq1-\delta_{1},
\]
for some small $\delta_{1}>0$. Then%
\[
\operatorname{Tr}\left\{  \Pi_{\left\lceil n\left(  \eta N_{S}+\delta
_{2}\right)  \right\rceil }\mathcal{N}^{\otimes n}(\rho^{\left(  n\right)
})\right\}  \geq1-2\sqrt{\delta_{1}}-\delta_{1}-\exp\left\{  -\left(
2\delta_{3}^{2}\eta N_{S}\right)  n\right\}  ,
\]
where $\mathcal{N}^{\otimes n}$ represents $n$ instances of the pure-loss
bosonic channel with transmissivity $\eta$, and $\delta_{2}$ and $\delta_{3}$
are fixed positive constants such that $0<\delta_{3}\leq\frac{1}{\left\lceil
nN_{S}\right\rceil }\left(  n\delta_{2}-\eta\right)  $ for $n$ large enough.
\end{lemma}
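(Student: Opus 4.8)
The plan is to reduce the claim to a tail bound for a binomial random variable. First I would invoke a gentle-measurement argument: since $\operatorname{Tr}\{\Pi_{\lceil nN_{S}\rceil}\rho^{(n)}\}\geq 1-\delta_{1}$, the (subnormalized) operator $\widetilde{\rho}\equiv\Pi_{\lceil nN_{S}\rceil}\rho^{(n)}\Pi_{\lceil nN_{S}\rceil}$ obeys $\|\rho^{(n)}-\widetilde{\rho}\|_{1}\leq 2\sqrt{\delta_{1}}$ and $\operatorname{Tr}\widetilde{\rho}\geq 1-\delta_{1}$. Writing $L'\equiv\lceil n(\eta N_{S}+\delta_{2})\rceil$ and using monotonicity of the trace norm under the channel $\mathcal{N}^{\otimes n}$ together with $0\leq\Pi_{L'}\leq I$, one gets
\[
\operatorname{Tr}\{\Pi_{L'}\,\mathcal{N}^{\otimes n}(\rho^{(n)})\}\ \geq\ \operatorname{Tr}\{\Pi_{L'}\,\mathcal{N}^{\otimes n}(\widetilde{\rho})\}-2\sqrt{\delta_{1}},
\]
so it remains to lower bound the right-hand term, where now $\widetilde{\rho}$ is supported entirely on the subspace of total photon number at most $L\equiv\lceil nN_{S}\rceil$.

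Next I would use the explicit action of the pure-loss channel on photon-number states. From the beamsplitter relation behind (\ref{eq:bosonic-channel}) with the environment in the vacuum, $\mathcal{N}(|k\rangle\langle k|)=\sum_{j=0}^{k}\binom{k}{j}\eta^{j}(1-\eta)^{k-j}|j\rangle\langle j|$; that is, the output photon number of the number-state input $|k\rangle$ is distributed as $\mathrm{Bin}(k,\eta)$. Two further observations make this usable: (i) $I-\Pi_{L'}$ is diagonal in the tensor-product photon-number basis, and a short computation shows that the diagonal matrix elements of $\mathcal{N}^{\otimes n}(|\vec{a}\rangle\langle\vec{b}|)$ vanish unless $\vec{a}=\vec{b}$, so only the number-basis diagonal of $\widetilde{\rho}$ contributes; and (ii) for a number-state input $|\vec{a}\rangle$ the output photon numbers of the $n$ modes are independent with the $i$-th being $\mathrm{Bin}(a_{i},\eta)$, so the total output photon number is $\mathrm{Bin}(|\vec{a}|,\eta)$ with $|\vec{a}|=\sum_{i}a_{i}$ (a sum of independent binomials with common success probability). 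Writing the number-basis diagonal of $\widetilde{\rho}$ as $\{p_{\vec{a}}\}_{|\vec{a}|\leq L}$ with $p_{\vec{a}}\geq 0$ and $\sum_{\vec{a}}p_{\vec{a}}=\operatorname{Tr}\widetilde{\rho}\geq 1-\delta_{1}$, this gives
\[
\operatorname{Tr}\{(I-\Pi_{L'})\,\mathcal{N}^{\otimes n}(\widetilde{\rho})\}\ =\ \sum_{\vec{a}:\,|\vec{a}|\leq L}p_{\vec{a}}\,\Pr\!\left[\mathrm{Bin}(|\vec{a}|,\eta)>L'\right].
\]

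Finally I would control each probability with a Chernoff/Hoeffding tail bound for the binomial. Since $|\vec{a}|\leq L=\lceil nN_{S}\rceil$, the mean $\eta|\vec{a}|\leq\eta\lceil nN_{S}\rceil\leq n\eta N_{S}+\eta$, whereas $L'\geq n(\eta N_{S}+\delta_{2})$, so the deviation of $\mathrm{Bin}(|\vec{a}|,\eta)$ above its mean needed to exceed $L'$ is at least $n\delta_{2}-\eta\geq\delta_{3}\lceil nN_{S}\rceil$ by the hypothesis on $\delta_{3}$; feeding this margin into the standard binomial concentration estimate yields $\Pr[\mathrm{Bin}(|\vec{a}|,\eta)>L']\leq\exp\{-(2\delta_{3}^{2}\eta N_{S})n\}$ uniformly in $\vec{a}$. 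Hence, using that $\mathcal{N}^{\otimes n}$ is trace preserving, $\operatorname{Tr}\{\Pi_{L'}\mathcal{N}^{\otimes n}(\widetilde{\rho})\}=\operatorname{Tr}\widetilde{\rho}-\operatorname{Tr}\{(I-\Pi_{L'})\mathcal{N}^{\otimes n}(\widetilde{\rho})\}\geq(1-\delta_{1})-\exp\{-(2\delta_{3}^{2}\eta N_{S})n\}$, and combining with the $2\sqrt{\delta_{1}}$ loss from the first step gives the claimed inequality. I expect the main obstacle to be purely technical: carefully tracking the ceiling functions in $L$ and $L'$ and selecting the precise form of the binomial tail bound that produces exactly the stated exponent. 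The conceptual core — that the loss channel converts an approximate input photon-number cutoff at $nN_{S}$ into an output cutoff at $n(\eta N_{S}+\delta_{2})$ with only exponentially small leakage — is entirely captured by the fact that a number-state input of total photon number $m$ yields output photon number $\mathrm{Bin}(m,\eta)$.
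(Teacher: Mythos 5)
Your proposal is correct and follows essentially the same route as the paper's proof: gentle measurement to pass to the truncated state $\widetilde\rho$ at a $2\sqrt{\delta_1}$ cost, the observation that only the number-basis diagonal of $\widetilde\rho$ contributes (since $\mathcal{N}^{\otimes n}(|\vec a\rangle\langle\vec b|)$ has no number-basis diagonal unless $\vec a=\vec b$), the identification of the output photon count of $|\vec a\rangle$ with a $\mathrm{Bin}(|\vec a|,\eta)$ random variable, and a Hoeffding tail bound using $L'\geq S(\eta+\delta_3)$ for $S\leq\lceil nN_S\rceil$ together with $S\geq n\eta N_S$ in the nontrivial regime $S>L'$. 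The only substantive difference is cosmetic: the paper spells out the beamsplitter matrix elements in full, while you invoke the known Binomial action of the loss channel on Fock states directly, which is cleaner.
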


\begin{proof}
We begin by observing that%
\begin{align}
&\!\!\!\!\!\!\!\!
   \text{Tr}\left\{  \Pi_{\left\lceil n\left(  \eta N_{S}+\delta_{2}\right)
\right\rceil }\mathcal{N}^{\otimes n}(\rho^{\left(  n\right)  })\right\}
\nonumber\\
&  \geq\text{Tr}\left\{  \Pi_{\left\lceil n\left(  \eta N_{S}+\delta
_{2}\right)  \right\rceil }\mathcal{N}^{\otimes n}\left(  \Pi_{\left\lceil
nN_{S}\right\rceil }\rho^{\left(  n\right)  }\Pi_{\left\lceil nN_{S}%
\right\rceil }\right)  \right\}  -\left\Vert \mathcal{N}^{\otimes n}%
(\rho^{\left(  n\right)  })-\mathcal{N}^{\otimes n}\left(  \Pi_{\left\lceil
nN_{S}\right\rceil }\rho^{\left(  n\right)  }\Pi_{\left\lceil nN_{S}%
\right\rceil }\right)  \right\Vert _{1}\nonumber\\
&  \geq\text{Tr}\left\{  \Pi_{\left\lceil n\left(  \eta N_{S}+\delta
_{2}\right)  \right\rceil }\mathcal{N}^{\otimes n}\left(  \Pi_{\left\lceil
nN_{S}\right\rceil }\rho^{\left(  n\right)  }\Pi_{\left\lceil nN_{S}%
\right\rceil }\right)  \right\}  -\left\Vert \rho^{\left(  n\right)  }%
-\Pi_{\left\lceil nN_{S}\right\rceil }\rho^{\left(  n\right)  }\Pi
_{\left\lceil nN_{S}\right\rceil }\right\Vert _{1}\nonumber\\
&  \geq\text{Tr}\left\{  \Pi_{\left\lceil n\left(  \eta N_{S}+\delta
_{2}\right)  \right\rceil }\mathcal{N}^{\otimes n}\left(  \Pi_{\left\lceil
nN_{S}\right\rceil }\rho^{\left(  n\right)  }\Pi_{\left\lceil nN_{S}%
\right\rceil }\right)  \right\}  -2\sqrt{\delta_{1}}.
\label{eq:first-block-sim-proof}%
\end{align}
The first inequality is a special case of the inequality%
\begin{equation}
\text{Tr}\left\{  \Lambda\rho\right\}  \geq\text{Tr}\left\{  \Lambda
\sigma\right\}  -\left\Vert \rho-\sigma\right\Vert _{1},
\label{eq:trace-inequality}%
\end{equation}
which holds for $\Lambda$, $\rho$, $\sigma$ such that $0\leq\Lambda\leq I$,
$0\leq\rho,\sigma$, and Tr$\left\{  \rho\right\}  $, Tr$\left\{
\sigma\right\}  \leq1$. The second inequality follows from the monotonicity of
trace distance under quantum operations. The third inequality is a consequence
of the Gentle Operator Lemma~\cite{W99,ON07}, which states that $\Vert
\rho-\sqrt{\Lambda}\rho\sqrt{\Lambda}\Vert _{1}\leq2\sqrt{\varepsilon}$
if Tr$\left\{  \Lambda\rho\right\}  \geq1-\varepsilon$ for $0\leq
\varepsilon\leq1$, $0\leq\Lambda\leq I$, $\rho\geq0$, Tr$\left\{
\rho\right\}  \leq1$. We can then expand $\Pi_{\left\lceil nN_{S}\right\rceil
}\rho^{\left(  n\right)  }\Pi_{\left\lceil nN_{S}\right\rceil }$ as follows in
the number-state basis:%
\[
\sum_{\substack{a^{n},b^{n}:\\\sum_{i}a_{i},\sum_{i}b_{i}\leq\left\lceil
nN_{S}\right\rceil }}\rho_{a^{n},b^{n}}\left\vert a^{n}\right\rangle
\left\langle b^{n}\right\vert ,
\]
where $a^{n}\equiv a_{1}\cdots a_{n}$, $b^{n}\equiv b_{1}\cdots b_{n}$, and
$\rho^{\left(  n\right)  }=\sum_{a^{n},b^{n}}\rho_{a^{n},b^{n}}\left\vert
a^{n}\right\rangle \left\langle b^{n}\right\vert $. Observe that the
hypothesis of the theorem is equivalent to%
\[
\sum_{a^{n}:\sum_{i}a_{i}\leq\left\lceil nN_{S}\right\rceil }\rho_{a^{n}%
,a^{n}}\geq1-\delta_{1}.
\]
The remaining term in (\ref{eq:first-block-sim-proof}) is equal to%
\begin{multline}
\text{Tr}\left\{  \Pi_{\left\lceil n\left(  \eta N_{S}+\delta_{2}\right)
\right\rceil }\mathcal{N}^{\otimes n}\left(  \Pi_{\left\lceil nN_{S}%
\right\rceil }\rho^{\left(  n\right)  }\Pi_{\left\lceil nN_{S}\right\rceil
}\right)  \right\} \label{eq:term-to-sub-into}\\
=\text{Tr}\left\{  \left(  \Pi_{\left\lceil n\left(  \eta N_{S}+\delta
_{2}\right)  \right\rceil }\otimes I\right)  U^{\otimes n}\left(
\Pi_{\left\lceil nN_{S}\right\rceil }\rho^{\left(  n\right)  }\Pi_{\left\lceil
nN_{S}\right\rceil }\otimes\left\vert 0\right\rangle \left\langle 0\right\vert
^{\otimes n}\right)  \left(  U^{\dag}\right)  ^{\otimes n}\right\} \\
=\sum_{\substack{a^{n},b^{n}:\\\sum_{i}a_{i},\sum_{i}b_{i}\leq\left\lceil
nN_{S}\right\rceil }}\rho_{a^{n},b^{n}}\text{Tr}\left\{  \left(
\Pi_{\left\lceil n\left(  \eta N_{S}+\delta_{2}\right)  \right\rceil }\otimes
I\right)  U^{\otimes n}\left(  \left\vert a^{n}\right\rangle \left\langle
b^{n}\right\vert \otimes\left\vert 0\right\rangle \left\langle 0\right\vert
^{\otimes n}\right)  \left(  U^{\dag}\right)  ^{\otimes n}\right\}  ,
\end{multline}
where $U$ is the unitary transformation corresponding to a beamsplitter with
transmissivity $\eta$. Now, we recall briefly how a beamsplitter of
transmissivity $\eta$ acts on the joint state $\left\vert n\right\rangle
_{A}\left\vert 0\right\rangle _{E}$:%
\begin{align*}
\left\vert n\right\rangle _{A}\left\vert 0\right\rangle _{E}  &
=\frac{\left(  \hat{a}^{\dag}\right)  ^{n}}{\sqrt{n!}}\left\vert
0\right\rangle _{A}\left\vert 0\right\rangle _{E}\\
&  \mapsto \frac{\left(  \sqrt{\eta}\hat{a}^{\dag}+\sqrt{1-\eta}\hat
{e}^{\dag}\right)  ^{n}}{\sqrt{n!}}\left\vert 0\right\rangle _{A}\left\vert
0\right\rangle _{E}\\
&  =\frac{1}{\sqrt{n!}}\sum_{k=0}^{n}\binom{n}{k}\sqrt{\eta}^{k}\sqrt{1-\eta
}^{n-k}\left(  \hat{a}^{\dag}\right)  ^{k}\left(  \hat{e}^{\dag}\right)
^{n-k}\left\vert 0\right\rangle _{A}\left\vert 0\right\rangle _{E}\\
&  =\sum_{k=0}^{n}\sqrt{\binom{n}{k}}\sqrt{\eta}^{k}\sqrt{1-\eta}%
^{n-k}\left\vert k\right\rangle _{A}\left\vert n-k\right\rangle _{E},
\end{align*}
so that%
\begin{multline*}
U^{\otimes n}\left(  \left\vert a^{n}\right\rangle \left\langle b^{n}%
\right\vert \otimes\left\vert 0\right\rangle \left\langle 0\right\vert
^{\otimes n}\right)  \left(  U^{\dag}\right)  ^{\otimes n}\\
=\sum_{k_{1}=0}^{a_{1}}\cdots\sum_{k_{n}=0}^{a_{n}}\sum_{l_{1}=0}^{b_{1}%
}\cdots\sum_{l_{n}=0}^{b_{n}}\sqrt{\binom{a_{1}}{k_{1}}\cdots\binom{a_{n}%
}{k_{n}}\binom{b_{1}}{l_{1}}\cdots\binom{b_{n}}{l_{n}}}\times\\
\sqrt{\eta}^{\sum_{i=1}^{n}k_{i}+l_{i}}\sqrt{1-\eta}^{\sum_{i=1}^{n}%
a_{i}-k_{i}+b_{i}-l_{i}}\left\vert k_{1}\right\rangle \left\langle
l_{1}\right\vert _{A_{1}}\otimes\cdots\otimes\left\vert k_{n}\right\rangle
\left\langle l_{n}\right\vert _{A_{n}}\otimes\\
\left\vert a_{1}-k_{1}\right\rangle \left\langle b_{1}-l_{1}\right\vert
_{E_{1}}\otimes\cdots\otimes\left\vert a_{n}-k_{n}\right\rangle \left\langle
b_{n}-l_{n}\right\vert _{E_{n}}%
\end{multline*}
To evaluate the trace in (\ref{eq:term-to-sub-into}), we can perform it with
respect to the number basis, calculating
\[
\sum_{\substack{a^{n},b^{n}:\\\sum_{i}a_{i},\sum_{i}b_{i}\leq\left\lceil
nN_{S}\right\rceil }}\rho_{a^{n},b^{n}}\sum_{\substack{c^{n},d^{n}:\\\sum
_{i}c_{i}\leq\left\lceil n\left(  \eta N_{S}+\delta_{2}\right)  \right\rceil
}}\left\langle c^{n}\right\vert _{B^{n}}\left\langle d^{n}\right\vert _{E^{n}%
}U^{\otimes n}\left(  \left\vert a^{n}\right\rangle \left\langle
b^{n}\right\vert _{A^{n}}\otimes\left(  \left\vert 0\right\rangle \left\langle
0\right\vert \right)  _{E^{n}}^{\otimes n}\right)  \left(  U^{\dag}\right)
^{\otimes n}\left\vert c^{n}\right\rangle _{B^{n}}\left\vert d^{n}%
\right\rangle _{E^{n}}%
\]
as%
\begin{multline*}
\sum_{\substack{a^{n},b^{n}:\\\sum_{i}a_{i},\sum_{i}b_{i}\leq\left\lceil
nN_{S}\right\rceil }}\rho_{a^{n},b^{n}}\sum_{\substack{c^{n},d^{n}:\\\sum
_{i}c_{i}\leq\left\lceil n\left(  \eta N_{S}+\delta_{2}\right)  \right\rceil
}}\sum_{k_{1}=0}^{a_{1}}\cdots\sum_{k_{n}=0}^{a_{n}}\sum_{l_{1}=0}^{b_{1}%
}\cdots\sum_{l_{n}=0}^{b_{n}}\sqrt{\binom{a_{1}}{k_{1}}\cdots\binom{a_{n}%
}{k_{n}}\binom{b_{1}}{l_{1}}\cdots\binom{b_{n}}{l_{n}}}\times\\
\sqrt{\eta}^{\sum_{i=1}^{n}k_{i}+l_{i}}\sqrt{1-\eta}^{\sum_{i=1}^{n}%
a_{i}-k_{i}+b_{i}-l_{i}}\left\langle c_{1}|k_{1}\right\rangle \left\langle
l_{1}|c_{1}\right\rangle _{A_{1}}\times\cdots\times\left\langle c_{n}%
|k_{n}\right\rangle \left\langle l_{n}|c_{n}\right\rangle _{A_{n}}\times\\
\left\langle d_{1}|a_{1}-k_{1}\right\rangle \left\langle b_{1}-l_{1}%
|d_{1}\right\rangle _{E_{1}}\times\cdots\times\left\langle d_{n}|a_{n}%
-k_{n}\right\rangle \left\langle b_{n}-l_{n}|d_{n}\right\rangle _{E_{n}}%
\end{multline*}
By inspection, the only terms that survive are those for which $c_{i}%
=k_{i}=l_{i}$ and $d_{i}=a_{i}-k_{i}=b_{i}-l_{i}$ (and hence $a_{i}=b_{i}$),
leaving the above equal to%
\begin{equation}
\sum_{\substack{a^{n}:\\\sum_{i}a_{i}\leq\left\lceil nN_{S}\right\rceil }%
}\rho_{a^{n},a^{n}}\sum_{k_{1}=0}^{a_{1}}\cdots\sum_{k_{n}=0}^{a_{n}%
}\mathcal{I}\left[  \sum_{i=1}^{n}k_{i}\leq\left\lceil n\left(  \eta
N_{S}+\delta_{2}\right)  \right\rceil \right]  \binom{a_{1}}{k_{1}}%
\cdots\binom{a_{n}}{k_{n}}\eta^{\sum_{i=1}^{n}a_{i}-k_{i}}\left(
1-\eta\right)  ^{\sum_{i=1}^{n}k_{i}}, \label{eq:final-term-to-bound}%
\end{equation}
where $\mathcal{I}\left[  \cdot\right]  $ is an indicator function. To find a
lower bound on this expression, first we should realize that%
\begin{equation}
\sum_{k_{1}=0}^{a_{1}}\cdots\sum_{k_{n}=0}^{a_{n}}\mathcal{I}\left[
\sum_{i=1}^{n}k_{i}\leq\left\lceil n\left(  \eta N_{S}+\delta_{2}\right)
\right\rceil \right]  \binom{a_{1}}{k_{1}}\cdots\binom{a_{n}}{k_{n}}\eta
^{\sum_{i=1}^{n}a_{i}-k_{i}}\left(  1-\eta\right)  ^{\sum_{i=1}^{n}k_{i}}
\label{eq:final-final-term-to-bound}%
\end{equation}
is equal to one whenever $\sum_{i}a_{i}\leq\left\lceil n\left(  \eta
N_{S}+\delta_{2}\right)  \right\rceil $ because, in such a case, we are
guaranteed that $\sum_{i=1}^{n}k_{i}\leq\left\lceil n\left(  \eta N_{S}%
+\delta_{2}\right)  \right\rceil $. Thus, we should focus on the case in which%
\[
\left\lceil n\left(  \eta N_{S}+\delta_{2}\right)  \right\rceil <\sum_{i}%
a_{i}\leq\left\lceil nN_{S}\right\rceil .
\]
However, the expression in (\ref{eq:final-final-term-to-bound}) is related to
the probability that the average of a large number of independent Bernoulli
random variables is no larger than the mean of these random variables plus a
small offset. That is, after defining the i.i.d.~Bernoulli random variables
$X_{i,j}$ each with parameter $\eta$,\footnote{The nice physical
interpretation here is that each of the $\sum_{i}a_{i}$ i.i.d.~Bernoulli
random variables corresponds to a single photon that has a probability $\eta$
of making it through the beamsplitter.} such that the binomial random variable
$K_{i}=\sum_{j}X_{i,j}$, we find that the expression above is equal to%
\begin{equation}
\Pr\left\{  \sum_{i,j}X_{i,j}\leq\left\lceil n\left(  \eta N_{S}+\delta
_{2}\right)  \right\rceil \right\}  \geq\Pr\left\{  \sum_{i,j}X_{i,j}\leq
S\left(  \eta+\delta_{3}\right)  \right\}  , \label{eq:prob-concentration}%
\end{equation}
where $S\equiv\sum_{i}a_{i}$ is the total number of these Bernoulli random
variables (total number of photons)\ and the inequality in
(\ref{eq:prob-concentration}) follows from%
\begin{align*}
\left\lceil n\left(  \eta N_{S}+\delta_{2}\right)  \right\rceil  &  \geq
n\left(  \eta N_{S}+\delta_{2}\right) \\
&  =n\eta N_{S}+n\delta_{2}\\
&  \geq\eta\left(  \left\lceil nN_{S}\right\rceil -1\right)  +n\delta_{2}\\
&  \geq\left\lceil nN_{S}\right\rceil \left(  \eta+\delta_{3}\right) \\
&  \geq S\left(  \eta+\delta_{3}\right)  ,
\end{align*}
where for large enough $n$ we can choose a constant $\delta_{3}$ such that
$0<\delta_{3}\leq\frac{1}{\left\lceil nN_{S}\right\rceil }\left(  n\delta
_{2}-\eta\right)  $. We can then apply the Hoeffding concentration bound \cite{H63,T12} to
conclude that (\ref{eq:prob-concentration}) is larger than%
\[
1-\exp\left\{  -2\delta_{3}^{2}S\right\}  \geq1-\exp\left\{  -2\delta_{3}%
^{2}\eta nN_{S}\right\}  ,
\]
where the inequality follows from the assumption that $S\geq\left\lceil
n\left(  \eta N_{S}+\delta_{2}\right)  \right\rceil \geq n\eta N_{S}$. Then
this last expression converges to one exponentially fast with increasing $n$.
Thus, it follows that the expression in (\ref{eq:final-term-to-bound}) is
larger than%
\[
\left(  1-\delta_{1}\right)  \left(  1-\exp\left\{  -\left(  2\delta_{3}%
^{2}\eta N_{S}\right)  n\right\}  \right)  \geq1-\delta_{1}-\exp\left\{
-\left(  2\delta_{3}^{2}\eta N_{S}\right)  n\right\}  .
\]
Combining the above inequality with the one in (\ref{eq:first-block-sim-proof}%
) gives the statement of the lemma.
\end{proof}

\bigskip

With Lemmas~\ref{lem:energy-entropy-dimension} and \ref{lem:input-output-size}%
\ in hand, we can now easily prove the strong converse by an approach similar
to that in (\ref{eq:trivial-SC}). Indeed, let $\rho_{m}$ be the codewords of
any code for the pure-loss bosonic channel with transmissivity $\eta$ such
that the average codeword density operator has a large projection onto a space
with total photon number less than $\left\lceil nN_{S}\right\rceil $:%
\begin{equation}
\frac{1}{M}\sum_{m}\text{Tr}\left\{  \Pi_{\left\lceil nN_{S}\right\rceil }%
\rho_{m}\right\}  \geq1-\delta_{1}\left(  n\right)  ,
\label{eq:max-photon-number-constraint}%
\end{equation}
where $\delta_{1}\left(  n\right)  $ is a function decreasing to zero with
increasing $n$. Let $\left\{  \Lambda_{m}\right\}  $ denote a decoding
POVM\ acting on the output space of $n$ instances of the pure-loss bosonic channel.

\begin{theorem}
[Strong converse]The average success probability of any code satisfying
(\ref{eq:max-photon-number-constraint}) is bounded as follows:%
\[
\frac{1}{M}\sum_{m}\operatorname{Tr}\left\{  \Lambda_{m}\mathcal{N}^{\otimes
n}\left(  \rho_{m}\right)  \right\}  \leq2^{-n\left(  R-g\left(  \eta
N_{S}\right)  -\delta_{2}-\delta\right)  }+2\sqrt{\delta_{1}\left(  n\right)
+\exp\left\{  -\left(  2\delta_{3}^{2}\eta N_{S}\right)  n\right\}
+2\sqrt{\delta_{1}\left(  n\right)  }},
\]
where $\mathcal{N}^{\otimes n}$ denotes $n$ instances of the pure-loss bosonic
channel, $\delta$ is defined in Lemma~\ref{lem:energy-entropy-dimension} (with
$N_{S}$ replaced by $\eta N_{S}$), $\delta_{1}\left(  n\right)  $ is defined
in (\ref{eq:max-photon-number-constraint}), $\delta_{2}$ is an arbitrarily
small positive constant, and $\delta_{3}$ is defined in
Lemma~\ref{lem:input-output-size}. Thus, if $R>g\left(  \eta N_{S}\right)  $,
then we can pick $\delta_{2}$ and $\delta$ small enough such that $R>g\left(
\eta N_{S}\right)  +\delta_{2}+\delta$, and it follows that the success
probability of any family of codes satisfying (\ref{eq:max-photon-number-constraint})
decreases to zero in the limit of large $n$.
\end{theorem}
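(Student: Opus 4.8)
The plan is to follow the simulation argument sketched in the discussion preceding the statement, and to combine it with the refinement of the Holevo bound in~(\ref{eq:trivial-SC}). Concretely, I would start from the average success probability $\frac{1}{M}\sum_m \operatorname{Tr}\{\Lambda_m \mathcal{N}^{\otimes n}(\rho_m)\}$ and insert the photon-number cutoff projector $\Pi_{\lceil n(\eta N_S+\delta_2)\rceil}$ at the channel output. First I would write $\mathcal{N}^{\otimes n}(\rho_m) = \Pi \mathcal{N}^{\otimes n}(\rho_m) \Pi + (\text{remainder})$ with $\Pi \equiv \Pi_{\lceil n(\eta N_S+\delta_2)\rceil}$, and bound the contribution of the remainder using the Gentle Operator Lemma together with Lemma~\ref{lem:input-output-size} applied to the \emph{average} codeword $\bar\rho \equiv \frac1M\sum_m \rho_m$, which by~(\ref{eq:max-photon-number-constraint}) satisfies $\operatorname{Tr}\{\Pi_{\lceil nN_S\rceil}\bar\rho\}\geq 1-\delta_1(n)$. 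This is where the nested square-root error terms $2\sqrt{\delta_1(n)+\exp\{-(2\delta_3^2\eta N_S)n\}+2\sqrt{\delta_1(n)}}$ in the statement come from: Lemma~\ref{lem:input-output-size} gives that $\operatorname{Tr}\{\Pi\,\mathcal{N}^{\otimes n}(\bar\rho)\}\geq 1-\epsilon_n$ with $\epsilon_n = 2\sqrt{\delta_1(n)}+\delta_1(n)+\exp\{-(2\delta_3^2\eta N_S)n\}$, and a second application of the Gentle Operator Lemma (now on the output side, to pass from $\mathcal{N}^{\otimes n}(\bar\rho)$ to $\sqrt{\Pi}\,\mathcal{N}^{\otimes n}(\bar\rho)\sqrt{\Pi}$) contributes $2\sqrt{\epsilon_n}$.

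Next I would bound the main term $\frac1M\sum_m \operatorname{Tr}\{\Lambda_m\, \Pi\,\mathcal{N}^{\otimes n}(\rho_m)\,\Pi\}$ exactly as in~(\ref{eq:trivial-SC}): since $\Pi\,\mathcal{N}^{\otimes n}(\rho_m)\,\Pi \leq \Pi$ as operators (the channel output is a density operator, hence $\leq I$, so compressing by $\Pi$ gives something $\leq\Pi$), and since $\sum_m \Lambda_m = I$ on the output space, we get
\[
\frac{1}{M}\sum_m \operatorname{Tr}\{\Lambda_m\,\Pi\,\mathcal{N}^{\otimes n}(\rho_m)\,\Pi\} \leq \frac{1}{M}\operatorname{Tr}\left\{\Big(\sum_m\Lambda_m\Big)\Pi\right\} = \frac{\operatorname{rank}\Pi}{M}.
\]
Then Lemma~\ref{lem:energy-entropy-dimension}, applied with $N_S$ replaced by $\eta N_S + \delta_2$ (so that the rank of $\Pi_{\lceil n(\eta N_S+\delta_2)\rceil}$ is at most $2^{n[g(\eta N_S+\delta_2)+\delta]}$), together with $M = 2^{nR}$ and the continuity estimate $g(\eta N_S+\delta_2) \leq g(\eta N_S) + \delta_2$ (using $g'(x) = \log_2(1+1/x) \leq \log_2$ is not tight enough; more simply, monotonicity of $g$ and a crude bound absorbing the difference into $\delta_2$), yields $\operatorname{rank}\Pi / M \leq 2^{-n(R - g(\eta N_S) - \delta_2 - \delta)}$. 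Adding the two contributions gives the claimed bound.

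The mildly delicate points are bookkeeping rather than conceptual. One is making sure the two Gentle-Operator-Lemma applications are chained in the right order and that the constraint is used on the \emph{average} state $\bar\rho$ (the per-codeword constraint is not assumed), which works because $\operatorname{Tr}\{\Lambda_m \Pi \mathcal{N}^{\otimes n}(\rho_m)\Pi\}$ differs from $\operatorname{Tr}\{\Lambda_m \mathcal{N}^{\otimes n}(\rho_m)\}$ by at most $\|\mathcal{N}^{\otimes n}(\rho_m) - \Pi\mathcal{N}^{\otimes n}(\rho_m)\Pi\|_1$, and after averaging over $m$ and using convexity of the trace norm this is controlled by $\|\mathcal{N}^{\otimes n}(\bar\rho) - \Pi\mathcal{N}^{\otimes n}(\bar\rho)\Pi\|_1 \leq 2\sqrt{\epsilon_n}$. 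The second is absorbing the ceiling functions and the $g(\eta N_S+\delta_2)$ versus $g(\eta N_S)$ gap into the free parameters $\delta_2$ and $\delta$; since $g$ is continuous and increasing, for any target slack one can choose $\delta_2$ small enough and then $n$ large enough (shrinking $\delta$) so that $g(\eta N_S + \delta_2) + \delta < R$ whenever $R > g(\eta N_S)$. I do not expect a genuine obstacle here: the real content is in Lemmas~\ref{lem:energy-entropy-dimension} and~\ref{lem:input-output-size}, and the theorem is the short ``packaging'' step analogous to~(\ref{eq:trivial-SC}).
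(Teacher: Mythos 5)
Your proposal follows the same simulation-plus-counting structure as the paper's proof, and your accounting of the error term $\epsilon_n = 2\sqrt{\delta_1(n)}+\delta_1(n)+\exp\{-(2\delta_3^2\eta N_S)n\}$ exactly matches the one the paper obtains from Lemma~\ref{lem:input-output-size}, so the final bound you reach is the right one. However, there is a localized error in the way you justify passing from the per-codeword error terms to the average state. You write that, after averaging over $m$, the quantity $\frac{1}{M}\sum_m\bigl\Vert\mathcal{N}^{\otimes n}(\rho_m)-\Pi\mathcal{N}^{\otimes n}(\rho_m)\Pi\bigr\Vert_1$ is ``controlled by'' $\bigl\Vert\mathcal{N}^{\otimes n}(\bar\rho)-\Pi\mathcal{N}^{\otimes n}(\bar\rho)\Pi\bigr\Vert_1$ via convexity of the trace norm. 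Convexity goes the other way: it bounds the norm of the average by the average of the norms, which is not what is needed here. Applying the Gentle Operator Lemma only to the single state $\mathcal{N}^{\otimes n}(\bar\rho)$ does not by itself control the average of the per-codeword trace distances, because the operators $\Lambda_m$ vary with $m$ and cannot be pulled inside a single trace norm.

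The correct tool---and the one the paper uses---is the \emph{ensemble} form of the Gentle Operator Lemma: if $\sum_x p_X(x)\operatorname{Tr}\{\Lambda\rho_x\}\geq 1-\varepsilon$, then $\sum_x p_X(x)\bigl\Vert\rho_x-\sqrt{\Lambda}\rho_x\sqrt{\Lambda}\bigr\Vert_1\leq 2\sqrt{\varepsilon}$. Here the hypothesis $\frac{1}{M}\sum_m\operatorname{Tr}\{\Pi\mathcal{N}^{\otimes n}(\rho_m)\}\geq 1-\epsilon_n$ is indeed equivalent, by linearity of the trace, to $\operatorname{Tr}\{\Pi\mathcal{N}^{\otimes n}(\bar\rho)\}\geq 1-\epsilon_n$, which is exactly what Lemma~\ref{lem:input-output-size} gives when applied to $\bar\rho$. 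The conclusion of the ensemble lemma then directly bounds the average $\frac{1}{M}\sum_m\Vert\cdots\Vert_1\leq 2\sqrt{\epsilon_n}$, with no need to pass through the norm of the average. With that substitution, the rest of your argument---bounding the main term by $\operatorname{rank}\Pi/M$ via $\Pi\mathcal{N}^{\otimes n}(\rho_m)\Pi\leq\Pi$ and $\sum_m\Lambda_m=I$, then invoking Lemma~\ref{lem:energy-entropy-dimension}---is sound, and your remark about absorbing the $g(\eta N_S+\delta_2)$ versus $g(\eta N_S)+\delta_2$ gap into the free parameters is a reasonable reading of what the theorem statement is implicitly doing.
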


\begin{proof}
Consider that%
\begin{align*}
  \frac{1}{M}\sum_{m}\text{Tr}\left\{  \Lambda_{m}\mathcal{N}^{\otimes
n}\left(  \rho_{m}\right)  \right\} 
&  \leq\frac{1}{M}\sum_{m}\text{Tr}\left\{  \Lambda_{m}\Pi_{\left\lceil
n\left(  \eta N_{S}+\delta_{2}\right)  \right\rceil }\mathcal{N}^{\otimes
n}\left(  \rho_{m}\right)  \Pi_{\left\lceil n\left(  \eta N_{S}+\delta
_{2}\right)  \right\rceil }\right\} \\
&  \ \ \ \ \ \ \ \ \ \ +\frac{1}{M}\sum_{m}\left\Vert \mathcal{N}^{\otimes
n}\left(  \rho_{m}\right)  -\Pi_{\left\lceil n\left(  \eta N_{S}+\delta
_{2}\right)  \right\rceil }\mathcal{N}^{\otimes n}\left(  \rho_{m}\right)
\Pi_{\left\lceil n\left(  \eta N_{S}+\delta_{2}\right)  \right\rceil
}\right\Vert _{1}\\
&  \leq\frac{1}{M}\sum_{m}\text{Tr}\left\{  \Pi_{\left\lceil n\left(  \eta
N_{S}+\delta_{2}\right)  \right\rceil }\Lambda_{m}\Pi_{\left\lceil n\left(
\eta N_{S}+\delta_{2}\right)  \right\rceil }\mathcal{N}^{\otimes n}\left(
\rho_{m}\right)  \right\} \\
&  \ \ \ \ \ \ \ \ \ \ +2\sqrt{\delta_{1}\left(  n\right)  +\exp\left\{
-\left(  2\delta_{3}^{2}\eta N_{S}\right)  n\right\}  +2\sqrt{\delta
_{1}\left(  n\right)  }}.
\end{align*}
The first inequality is a consequence of (\ref{eq:trace-inequality}). The
second inequality follows from a variation of the Gentle Operator Lemma which
holds for ensembles~\cite{W99,ON07}. That is, for an ensemble $\left\{
p_{X}\left(  x\right)  ,\rho_{x}\right\}  $ for which $\sum_{x}p_{X}\left(
x\right)  $Tr$\left\{  \Lambda\rho_{x}\right\}  \geq1-\varepsilon$ for
$0\leq\varepsilon\leq1$, the following inequality holds $$\sum_{x}p_{X}\left(
x\right)  \Vert \rho_{x}-\sqrt{\Lambda}\rho_{x}\sqrt{\Lambda}\Vert
_{1}\leq2\sqrt{\varepsilon}.$$ We apply this along with
Lemma~\ref{lem:input-output-size} and the assumption in
(\ref{eq:max-photon-number-constraint}) to arrive at the second inequality.
Focusing on the first term in the last expression, we find the following upper bound by
a method similar to that in (\ref{eq:trivial-SC}):%
\begin{align*}
  \frac{1}{M}\sum_{m}\text{Tr}\left\{  \Pi_{\left\lceil n\left(  \eta
N_{S}+\delta_{2}\right)  \right\rceil }\Lambda_{m}\Pi_{\left\lceil n\left(
\eta N_{S}+\delta_{2}\right)  \right\rceil }\mathcal{N}^{\otimes n}\left(
\rho_{m}\right)  \right\} 
&  \leq\frac{1}{M}\sum_{m}\text{Tr}\left\{  \Pi_{\left\lceil n\left(  \eta
N_{S}+\delta_{2}\right)  \right\rceil }\Lambda_{m}\Pi_{\left\lceil n\left(
\eta N_{S}+\delta_{2}\right)  \right\rceil }\right\} \\
&  =M^{-1}\text{Tr}\left\{  \Pi_{\left\lceil n\left(  \eta N_{S}+\delta
_{2}\right)  \right\rceil }\right\} \\
&  \leq2^{-n\left(  R-g\left(  \eta N_{S}\right)  -\delta_{2}-\delta\right)
}.
\end{align*}
Here, the first inequality follows because $\left\Vert \mathcal{N}^{\otimes
n}\left(  \rho_{m}\right)  \right\Vert _{\infty}\leq1$. The following equality
holds because $\sum_{m}\Lambda_{m}=I$, and the last inequality follows from
Lemma~\ref{lem:energy-entropy-dimension}.
Putting everything
together, we arrive at the bound in the statement of the theorem.
\end{proof}

\section{Capacity-achieving codes meeting the maximum photon number constraint}

\label{sec:code-existence}The demand in (\ref{eq:max-photon-number-constraint}) 
seems like it is somewhat stringent. That is, do there actually exist
capacity-achieving codes
for the pure-loss bosonic channel that meet this constraint while being a
reliable scheme for communication?

This final section argues that there exist codes for the pure-loss bosonic
channel that meet the following two constraints:

\begin{enumerate}
\item The maximum photon number constraint in
(\ref{eq:max-photon-number-constraint}) is satisfied with $\delta_{1}\left(
n\right)  $ a function exponentially decreasing to zero in $n$.

\item The error probability is less than an arbitrarily small constant for
sufficiently large $n$.
\end{enumerate}
So our development here is a refinement of the coding theorem given in
Ref.~\cite{HolevoWerner2001,GGLMSY04}.

To be precise, for a given code, we would like for the following two 
conditions to hold
\begin{align}
\frac{1}{M}\sum_{m}\text{Tr}\left\{  \Pi_{\left\lceil nN_{S}\right\rceil }%
\rho_{m}\right\}   &  \geq1-\delta_{1}\left(  n\right)  ,\tag{$E_1$}\\
\frac{1}{M}\sum_{m}\text{Tr}\left\{  \Lambda_{m}\mathcal{N}^{\otimes n}\left(
\rho_{m}\right)  \right\}   &  \geq1-\varepsilon \tag{$E_2$},
\end{align}
for an arbitrarily small positive number $\varepsilon$  and sufficiently large $n$.
In order to prove the existence of codes satisfying these two constraints, we
pick codewords as $n$-fold tensor products of coherent states, randomly chosen
according to a complex, isotropic Gaussian distribution with variance
$N_{S}-\delta$ where $\delta$ is an arbitrarily small positive constant (the
same approach as given in Ref.~\cite{GGLMSY04}, following~\cite{HolevoWerner2001}).
We can then analyze the probability that the constraints $E_{1}$ and $E_{2}$ 
above are not met for a randomly chosen code:
\begin{align}
&\!\!\!\!\!\!\!\!\!\!\!\!\!\!\!\!
  \Pr\left\{  \left(  E_{1}\cap E_{2}\right)  ^{c}\right\} \nonumber\\
&  \leq\Pr\left\{  E_{1}^{c}\right\}  +\Pr\left\{  E_{2}^{c}\right\}
\nonumber\\
&  =\Pr\left\{  1-\frac{1}{M}\sum_{m}\text{Tr}\left\{  \Pi_{\left\lceil
nN_{S}\right\rceil }\rho_{m}\right\}  >\delta_{1}\left(  n\right)  \right\}
+\Pr\left\{  1-\frac{1}{M}\sum_{m}\text{Tr}\left\{  \Lambda_{m}\mathcal{N}%
^{\otimes n}\left(  \rho_{m}\right)  \right\}  >\varepsilon\right\}
\nonumber\\
&  \leq\frac{1}{\delta_{1}\left(  n\right)  }\mathbb{E}\left\{  1-\frac{1}%
{M}\sum_{m}\text{Tr}\left\{  \Pi_{\left\lceil nN_{S}\right\rceil }\rho
_{m}\right\}  \right\}  +\frac{1}{\varepsilon}\mathbb{E}\left\{  1-\frac{1}%
{M}\sum_{m}\text{Tr}\left\{  \Lambda_{m}\mathcal{N}^{\otimes n}\left(
\rho_{m}\right)  \right\}  \right\}.  \label{eq:union-markov}%
\end{align}
The first inequality is a consequence of the union bound and the second
follows from the Markov inequality. We know that for every constant
$\varepsilon^{2}$, the expectation of the average error probability of a
randomly chosen code is less than $\varepsilon^{2}$ as long as the rate is no
larger than $g\left(  \eta N_{S}\right)  $ and $n$ is sufficiently large
\cite{PhysRevA.54.1869,WGTS11}. This means that%
\begin{equation}
\mathbb{E}\left\{  1-\frac{1}{M}\sum_{m}\text{Tr}\left\{  \Lambda
_{m}\mathcal{N}^{\otimes n}\left(  \rho_{m}\right)  \right\}  \right\}
\leq\varepsilon^{2}. \label{eq:error-bound}%
\end{equation}
We can analyze the other term in (\ref{eq:union-markov}) as follows:%
\begin{equation}
\mathbb{E}\left\{  1-\frac{1}{M}\sum_{m}\text{Tr}\left\{  \Pi_{\left\lceil
nN_{S}\right\rceil }\rho_{m}\right\}  \right\}  =1-\frac{1}{M}\sum
_{m}\text{Tr}\left\{  \Pi_{\left\lceil nN_{S}\right\rceil }\mathbb{E}\left\{
\rho_{m}\right\}  \right\}  . \label{eq:meeting-photon-constraint}%
\end{equation}
The expected density operator $\mathbb{E}\left\{  \rho_{m}\right\}  $ is a
thermal state of mean photon number $N_{S}^{\prime}\equiv N_{S}-\delta$ since
we are choosing codewords as coherent states according to a complex Gaussian
distribution with variance $N_{S}^{\prime}$:%
\[
\mathbb{E}\left\{  \rho_{m}\right\}  =\theta\left(  N_{S}^{\prime}\right)
^{\otimes n},
\]
where%
\[
\theta\left(  N_{S}^{\prime}\right)  \equiv\int d^{2}\alpha\frac{1}{\pi
N_{S}^{\prime}}\exp\left\{  -\left\vert \alpha\right\vert ^{2}/N_{S}^{\prime
}\right\}  \ \left\vert \alpha\right\rangle \left\langle \alpha\right\vert
=\frac{1}{N_{S}^{\prime}+1}\sum_{l=0}^{\infty}\left(  \frac{N_{S}^{\prime}%
}{N_{S}^{\prime}+1}\right)  ^{l}\left\vert l\right\rangle \left\langle
l\right\vert .
\]
Observe that the distribution of a thermal state with respect to the photon
number basis is a geometric distribution with mean$~N_{S}^{\prime}$. Then
(\ref{eq:meeting-photon-constraint}) reduces to%
\[
1-\frac{1}{M}\sum_{m}\text{Tr}\left\{  \Pi_{\left\lceil nN_{S}\right\rceil
}\mathbb{E}\left\{  \rho_{m}\right\}  \right\}  =1-\text{Tr}\left\{
\Pi_{\left\lceil nN_{S}\right\rceil }\theta\left(  N_{S}^{\prime}\right)^{\otimes n}
\right\}  .
\]
Finally, the expression $1- \operatorname{Tr}\left\{  \Pi_{\left\lceil nN_{S}\right\rceil
}\theta\left(  N_{S}^{\prime}\right)^{\otimes n}  \right\}  $ is equal to the probability
that the average of a large number of independent geometric random variables deviate from
their mean by more than $\delta$, which is exponentially decreasing in
$n$ as%
\[
1-\text{Tr}\left\{  \Pi_{\left\lceil nN_{S}\right\rceil }\theta\left(
N_{S}^{\prime}\right)^{\otimes n}  \right\}  \leq [C\left(  \delta,N_S'\right)]^n    ,
\]
where $C\left(  \delta,N_S'\right)$ is a constant strictly less than one (see the
appendix for an explicit proof).
Thus, we can choose $\delta_{1}\left(  n\right)
=[C\left(  \delta,N_S'\right)]^{(n/2)}  $ (for example) and combine the above
bound with (\ref{eq:union-markov}) and (\ref{eq:error-bound}) to arrive at the
following upper bound%
\[
\Pr\left\{  \left(  E_{1}\cap E_{2}\right)^{c}\right\}  \leq
[C\left(  \delta,N_S'\right)]^{(n/2)}  +\varepsilon.
\]
So, for $n$ large enough and since $\delta$ can be an arbitrarily small
positive constant, this proves the existence of a code that satisfies the two
constraints given at the beginning of this section with a rate equal to
$g\left(  \eta N_{S}\right)  $. (In fact, the overwhelming fraction of codes
selected in this way satisfy these constraints for sufficiently large $n$,
while operating at the aforementioned rate.)

%

\section{Conclusion}

This paper has broadened the understanding of the classical
capacity of the pure-loss bosonic channel
by determining conditions under which a strong converse
theorem can and cannot hold. First, we
proved that there is a rate-error trade-off whenever there is only a mean photon
number constraint on the codewords input to the channels,
so that a strong converse theorem does not hold under such a constraint.
One can even use pure-state codewords to achieve this trade-off, which is an important distinction
between the classical and quantum theories of information for continuous variables.
Next, we proved that a strong converse theorem holds under a
particular maximum photon number constraint. Our proof was a simulation-based argument:
we first showed that it is possible to faithfully simulate the action of $n$ instances of
the pure-loss bosonic channel with transmissivity parameter $\eta \in [0,1]$, at a rate
of $g(\eta N_S)$ whenever it is guaranteed that the input to the channel has nearly all of its
shadow on a subspace having no more than $nN_S$ 
photons. By concatenating a channel code with the simulation code,
we showed that it would be possible to strongly violate the Holevo bound if one
could send classical data over the pure-loss bosonic channel at a rate $R > g(\eta N_S)$. 
Finally, we refined the coding theorem of~\cite{HolevoWerner2001} and~\cite{GGLMSY04},
to show that there exist coherent-state codes that achieve the classical capacity 
while satisfying our maximum photon number constraint.

\bigskip
{\bf Acknowledgements}. MMW is grateful to the quantum
information theory group at the Universitat Aut\`{o}noma de Barcelona for
hosting him for a research visit during April-May 2013, during which some of
the work for this research was completed. He is also grateful to the quantum information groups
at MIT and Raytheon BBN Technologies for hosting him as a
visitor during June-August 2013.
AW's work is supported
by the European Commission (STREP \textquotedblleft QCS\textquotedblright),
the European Research Council (Advanced Grant \textquotedblleft
IRQUAT\textquotedblright) and the Philip Leverhulme Trust;
furthermore by the Spanish MINECO,
project FIS2008-01236, with the support of FEDER funds.

\section*{Appendix}

Here we detail an explicit proof of the following bound:%
\[
\Pr\left\{  \frac{1}{n}\sum_{i=1}^{n}Z_{i}\geq\mu+\delta\right\}  \leq\left[
C\left(  \delta,p\right)  \right]  ^{n},
\]
where $C\left(  \delta,p\right)  $ is a constant strictly less than one,
$\delta>0$, and each $Z_{i}$ is an independent geometric random variable with
mean $\mu = \frac{p}{1-p}$ and probability mass function $\Pr\left\{
Z_{i}=k\right\}  =p^{k}\left(  1-p\right)  $ for $k\in\left\{  0,1,2,\ldots
\right\}  $. For this purpose, we use the well known \textquotedblleft
Bernstein trick\textquotedblright\ (exponential moment method)
\cite{T12}, according to which%
\[
\Pr\left\{  \frac{1}{n}\sum_{i=1}^{n}Z_{i}\geq\mu+\delta\right\}  \leq
\inf_{t>0}\left(  \frac{\mathbb{E}\left\{  \exp\left\{  tZ\right\}  \right\}
}{\exp\left\{  t\left(  \mu+\delta\right)  \right\}  }\right)  ^{n}.
\]
So we just need to find a $t$ for which $\mathbb{E}\left\{  \exp\left\{
tZ\right\}  \right\}  <\exp\left\{  t\left(  \mu+\delta\right)  \right\}  $.
The moment generating function $\mathbb{E}\left\{  \exp\left\{  tZ\right\}
\right\}  $ of a geometric random variable is%
\[
\mathbb{E}\left\{  \exp\left\{  tZ\right\}  \right\}  =\frac{1-p}{1-pe^{t}},
\]
where we require $t$ to be chosen so\ that $pe^{t}<1$ (i.e., $t<-\ln p$) in
order to ensure convergence of $\mathbb{E}\left\{  \exp\left\{  tZ\right\}
\right\}  $. Setting $x=e^{t}$, our problem is to find a value of $x$ for
which%
\[
\frac{1-p}{1-px}<x^{\left(  p/\left(  1-p\right)  +\delta\right)  }.
\]
At $x=1$, the left hand side\ is equal to the right hand side, and taking
derivatives, we find that%
\[
\left.  \frac{\partial}{\partial x}\left[  \frac{1-p}{1-px}\right]
\right\vert _{x=1}=\frac{p}{1-p},
\]
while%
\[
\left.  \frac{\partial}{\partial x}\left[  x^{\left(  p/\left(  1-p\right)
+\delta\right)  }\right]  \right\vert _{x=1}=\frac{p}{1-p}+\delta
\]
It holds that%
\[
\frac{p}{1-p} <\frac{p}{1-p}+\delta,
\]
so we can conclude that $x^{\left(  p/\left(  1-p\right)  +\delta\right)  }$
is growing faster than $\frac{1-p}{1-px}$ in a neighborhood of $1$, so that
there exists a value of $x<1/p$ (and thus $t$) such that%
\[
\frac{\mathbb{E}\left\{  \exp\left\{  tZ\right\}  \right\}  }{\exp\left\{
t\left(  \mu+\delta\right)  \right\}  }<1.
\]
We then set $C\left(  \delta,p\right)  =\mathbb{E}\left\{  \exp\left\{
tZ\right\}  \right\}  /\exp\left\{  t\left(  \mu+\delta\right)  \right\}  $
for this value of $t$.

\bibliographystyle{plain}
\bibliography{Ref}

\end{document}